\newtheorem{theorem}{Theorem}[section]
\newtheorem{lemma}[theorem]{Lemma}
\newtheorem{definition}{Definition}[section]
\makeatletter\renewcommand{\@biblabel}[1]{#1.}\makeatother
\newtcolorbox{empheqboxed}{colback=gray!20, 
 colframe=white,
 width=\textwidth,
 sharpish corners,
 top=0mm, 
 bottom=0pt
}
\title{Decorating the gauge/YBE correspondence
}
\author{Erdal Catak$^a$ and Mustafa Mullahasanoglu$^{b}$}
\affiliation{
$^{a}$ Department of Physics, Istanbul University,\\ 34134 Istanbul, Turkey \\[-0.4cm]

$^b$ Department of Physics, Bogazici University,\\ 34342 Bebek, Istanbul, Turkey\\[-0.4cm]

}
\emailAdd{ecatak@istanbul.edu.tr}
\emailAdd{mustafa.mullahasanoglu@std.bogazici.edu.tr}
\abstract{In this paper, we aim to study the three-dimensional $\mathcal N=2$ supersymmetric dual gauge theories on $S_b^3/\mathbb{Z}_r$ in the context of the gauge/YBE correspondence. We consider hyperbolic hypergeometric integral identities acquired via the equality of supersymmetric lens partition functions as solutions to the decoration transformation and the flipping relation in statistical mechanics. 
The solutions of those transformations aim at investigating various decorated lattice models possessing the Boltzmann weights of integrable Ising-like models obtained via the gauge/YBE correspondence.
We also constructed The Bailey pairs for the decoration transformation and the flipping relation. 
}
\keywords{Supersymmetric gauge theories, decoration transformation, Bailey lemma, flipping relation, lattice spin models, integrability, dualities, Kagome lattice.}
\begin{document}

\maketitle

\section{Introduction}

Exactly solved models are a remarkable field of research in statistical mechanics. Developments in the area of exactly solved models bring new techniques and understandings both in physics and mathematics. The most known equation, actually the integrability condition for the systems, is the Yang-Baxter equation (YBE). String theory gives many novel integrable models with correspondence such as gauge/Bethe ansatz between integrable spin chains to 2d supersymmetric gauge theories \cite{Nekrasov:2009uh, Nekrasov:2009ui} and also extended to four-dimensional
gauge theories\footnote{The relation between spin chains and four-dimensional $\mathcal{N}=2$ theories see also \cite{Pomoni:2019oib}.} in \cite{Nekrasov:2009rc}, four-dimensional Chern-Simons theory for generating integrable
$\sigma$-models \cite{Costello:2017dso, Costello:2018gyb, Costello:2019tri}
and there are also plenty of results for classical Yang-Baxter equations \cite{Catal-Ozer:2019tmm, Bakhmatov:2022rjn, Gubarev:2023jtp} from string theory. 

The Yang-Baxter equation has been also investigated under the gauge/YBE correspondence \cite{Spiridonov:2010em, Yamazaki:2012cp}, see also reviews \cite{Gahramanov:2017ysd, Yamazaki:2018xbx, Yagi:2016oum} allowing us to find new solutions to the star-triangle relation, the simplest version of YBE for edge interacting lattice spin models, from the equality of partition functions of the supersymmetric gauge theories.  
The star-triangle relation is the fundamental element for the exact solution in the transfer matrix approach for the square lattice model and duality transformations between the triangle and hexagon lattice models in two-dimensional planar space. Another integrability condition for the edge-interacting lattice spin models is the star-star relation \cite{Baxter:1997ssr} which can be derived by the existence of the star-triangle relation. 
The YBE for vertex type (spins interact over vertices) and interaction round a face (IRF) type (spins interact through a face) models are also studied by constructing Bailey pairs \cite{Gahramanov:2015cva} and deriving IRF-type YBE \cite{Baxter:1997ssr} with the help of the star-triangle relation or the star-star relation. 
Up to now, the integrability of lattice spin models by the star-triangle and the star-star relation and their non-planar duals consisting of higher-spin interactions \cite{Mullahasanoglu:2023nes} by the star-square relation and the generalized star-triangle relation are investigated in the context of the gauge/YBE correspondence.

However, adding secondary spins between each nearest-neighbor interaction is also possible by decoration transformation \cite{Naya1954, Fisher1959}. The transformation provides the construction of bond-decorated lattice spin models like a Kagome-lattice model \cite{Naya1954} and helps solve it by the corresponding integrable model. Ising-like integrable models are the models on two-dimensional planar space covered with one-unit elements. However, one can also have the lattice model consisting of different types of cells and the most known model is the Kagome-lattice model. It is important that it has the same coordination number with a square lattice but its critical temperature is not the same but lower. The decoration transformation is also a tool to solve various bond-decorated models \cite{Syozi:1980iw, ROJAS20091419, Stre_ka_2010} as well.

In this work, we obtain solutions to the decoration transformation and the flipping relation by considering some three-dimensional $\mathcal N=2$ supersymmetric dualities on $S^3_b/\mathbb{Z}_r$. Both transformations are solved for the two integrable lattice spin models which consist of different gauge symmetries $SU(2)$ case \cite{Gahramanov:2016ilb} and $U(1)$ case \cite{Bozkurt:2020gyy} in the context of the supersymmetric gauge theories. It is also studied that a solution with $U(1)$ gauge symmetry can be obtained by breaking $SU(2)$ the gauge symmetry. 
We applied the gauge symmetry-breaking method \cite{Spiridonov:2010em} that we reduce $SU(2)$ gauge symmetries to $U(1)$ gauge symmetries in three-dimensional $\mathcal N=2$ supersymmetric gauge theories. From a statistical mechanics point of view, it is interesting that the gauge symmetry breaking reduces Boltzmann weights in the solutions of the decoration transformation and the flipping relation as studied for the star-triangle relation \cite{Bozkurt:2020gyy} and the star-star relation \cite{Mullahasanoglu:2021xyf}.


We also construct Bailey pairs for the decoration transformation and the flipping relation. We realize that one of the Bailey pairs can be kept the same, but it is translated by its parameters by a trivial operator. The other partner of the Bailey pairs can be translated non-trivially through a Bailey chain of infinite length. We investigate this operation on the Bailey pairs which are recently studied in \cite{Gahramanov:2022jxz} for hyperbolic hypergeometric integral identities. However, the investigation of the Coxeter relations \cite{Gahramanov:2015cva} and the reduced R-operators in vertex-type YBE are left for further studies. We complete the work by introducing the decoration transformation and the flipping relation for IRF-type models \cite{BAXTER1986321, Date:1987wy, date:1987-2}.

The rest of the paper is organized as follows.
In section 2, we introduce the solutions of the decoration transformation and the flipping relation by the use of integral identities coming from the equality of the partition functions of $\mathcal{N}=2$ supersymmetric gauge theories on $S^3_b/\mathbb{Z}_r$. In section 3, we construct Bailey pairs for both relations considered in the previous section. In section 5, we derive both relations for IRF-type models as well, and the derivations are depicted. Finally, in section 6, we conclude the results and discuss further studies. The Appendix \ref{appendix1} consists of notations and mathematical tools.

\section{Three-dimensional IR dualities}

Historically, Seiberg duality is introduced for the four-dimensional $\mathcal N=1$ supersymmetric gauge theories. Here, we will discuss it as certain three-dimensional $\mathcal N=2$ supersymmetric gauge theories \cite{Intriligator:1996ex, Aharony:1997bx}. Dual theories have the following idea: The physical observables are the same at an infrared fixed point but different at the UV level. The equality of partition functions of supersymmetric theories studied on $S^3$ \cite{Kapustin:2010xq}, $S^3_b$  \cite{Dolan:2011rp, Gahramanov:gka, Amariti:2015vwa}, and $S^3_b/\mathbb{Z}_r$ \cite{Benini:2011nc, Imamura:2012rq, Imamura:2013qxa} and the equality of the  superconformal indices \cite{Krattenthaler:2011da, Kapustin:2011jm, Gahramanov:2013rda, Gahramanov:2016wxi} are some of the evidences for Seiberg dualities.

Here, we particularly study the three-dimensional $\mathcal N=2$ partition functions on the squashed lens space $S^3_b/\mathbb{Z}_r$.
The computation of partition functions can be done in several ways such as dimensional reduction of the four-dimensional lens superconformal index \cite{Benini:2011nc, Yamazaki:2013fva, Nieri:2015yia, Eren:2019ibl} and the supersymmetric localization technique \cite{Imamura:2012rq, Imamura:2013qxa}.

\subsection{The decoration transformation}

We consider decoration -or iteration- transformation \cite{Naya1954, Fisher1959} for lattice spin models in statistical mechanics.
The decoration transformation is a map between a spin system consisting of two outer spins interacting with a central spin and a two-spin system with a single interaction. It is a tool to acquire solutions for decorated models since it relates the partition functions of the integrable model and its decorated version up to some coefficient. In this study, we will work on the solutions to the decoration transformation to decorate integrable lattice spin models obtained via the gauge/YBE correspondence. 

In \cite{Kels:2018xge}, the decoration transformation\footnote{The decoration transformation is also called "chain relation" in the discussion of Zamolodchikov's 
fishnet model \cite{ZAMOLODCHIKOV198063, Kazakov:1983ns, Derkachov:2023xqq}.} is discussed as a triangle identity that appears in the algebraic structure of lattice spin models. The transformation is depicted in Fig.\ref{decorationfigure} and the mathematical expression is the following
\begin{align}
   \sum_{m_0} \int dx_0\: S(\sigma_0) W_{\alpha}(\sigma_1,\sigma_0)W_{\beta}(\sigma_2,\sigma_0)
   =\mathcal{R}(\alpha,\beta) W_{\alpha + \beta}(\sigma_1,\sigma_2)
   \label{decorationdefinition},
\end{align}
where $\mathcal{R}(\alpha,\beta)$ and $S(\sigma_0)$ stand for a spin-independent function and self-interaction term, respectively, and $\alpha, \beta$ are the spectral parameters without the condition crossing parameter\footnote{In the solutions of the star-triangle relation, we discuss the crossing parameter \cite{Gahramanov:2016ilb}.}. 
Integration for continuous spin variables $x_0$ and summation for discrete spin variables $m_0$ in the left-hand side of (\ref{decorationdefinition}) are evaluated over the center spin $\sigma_0=(x_0,m_0)$ as shown its elimination in  Fig.\ref{decorationfigure}. 

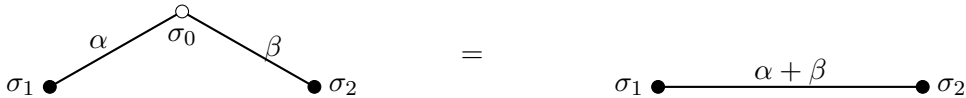
\begin{figure}[tbh]
\centering
\begin{tikzpicture}[scale=2]

\draw[-,thick] (-2,0)--(-2.87,-0.5);
\draw[-,thick] (-2,0)--(-1.13,-0.5);

\filldraw[fill=white,draw=black] (-2,0) circle (1.2pt)
node[below=2.5pt]{\color{black} $\sigma_0$};
\filldraw[fill=black,draw=black] (-2.87,-0.5) circle (1.2pt)
node[left=1.5pt] {\color{black} $\sigma_1$};
\filldraw[fill=black,draw=black] (-1.13,-0.5) circle (1.2pt)
node[right=1.5pt] {\color{black} $\sigma_2$};

\filldraw[fill=white,draw=black] (-2.55,-0.05) 
node[below=2.5pt]{\color{black} $\alpha$};
\filldraw[fill=white,draw=black] (-1.4,-0.05) 
node[below=2.5pt]{\color{black} $\beta$};

\fill[white!] (0.05,-0.3) circle (0.01pt)
node[left=0.05pt] {\color{black}$=$};

\draw[-,thick] (1.13,-0.5)--(2.87,-0.5);

\filldraw[fill=black,draw=black] (1.13,-0.5) circle (1.2pt)
node[left=1.5pt]{\color{black} $\sigma_1$};
\filldraw[fill=black,draw=black] (2.87,-0.5) circle (1.2pt)
node[right=1.5pt]{\color{black} $\sigma_2$};

\filldraw[fill=white,draw=black] (2,-0.2) 
node[below=2.5pt]{\color{black} $\alpha+\beta$};

\end{tikzpicture}
\caption{The decoration transformation.}
\label{decorationfigure}
\end{figure}

We note that the decoration transformation can be obtained by reducing spins from the star-triangle relation and the reduction is the decrease of the number of flavors from the supersymmetric gauge theories aspect. 
 
\subsubsection{ \texorpdfstring{$SU(2)$}{SU(2)} gauge theory}

Here, we study the equality of partition functions of three-dimensional $\mathcal N=2$ theories on the squashed lens space $S^3_b/\mathbb{Z}_r$. The equality of partition functions with $N_f=6$ flavors obtained via dimensional reduction \cite{Benini:2011nc, Yamazaki:2013fva, Eren:2019ibl} or localization techniques \cite{Imamura:2012rq, Imamura:2013qxa} is interpreted as star-triangle relation in \cite{Sarkissian:2018ppc, Gahramanov:2016ilb}. 

The ingredients of the dual theories can be described as the following: The electric theory has $SU(2)$ gauge symmetry and $N_f=6$ flavors and the magnetic theory consists of fifteen chiral multiples and does not have gauge degrees of freedom. The chiral multiplets of the electric theory transform under the fundamental representation of the gauge group and the flavor group and the vector multiple of the same theory transforms as the adjoint representation of the gauge group while the chiral multiples of the magnetic theory are in the totally antisymmetric tensor representation of the flavor group. 

Here, we investigate the lens hyperbolic hypergeometric\footnote{The lens hyperbolic hypergeometric gamma function and the notations are introduced in Appendix \ref{appendix1}.} integral identity\footnote{The beta integrals have the balancing condition on the parameters $a_i$ but there is no balancing condition in this study.} for the $N_f=4$ case which is obtained in \cite{Mullahasanoglu:2021xyf} by reducing the number of flavors
\begin{align} \nonumber
    &\frac{1}{2r\sqrt{-\omega_1\omega_2}} \sum_{y=-[ r/2 ]}^{[ r/2 ]} \int _{-\infty}^{\infty} \frac{\prod_{i=1}^4\gamma_h(a_i\pm z,u_i\pm y;\omega_1,\omega_2)}{\gamma_h(\pm 2z,\pm 2y;\omega_1,\omega_2)}  dz
     \nonumber \\
     &=
          \gamma_h\left(\omega_1+\omega_2-\sum_{i=1}^4a_i,-\sum_{i=1}^4u_i;\omega_1,\omega_2\right)
          \prod_{1\leq i<j\leq 4}\gamma_h(a_i + a_j,u_i + u_j;\omega_1,\omega_2)  \; ,
          \label{decorationequation}
\end{align}
where the summation is over the holonomies $y=\frac{r}{2\pi} \int A_\mu d x^{\mu}$, where $A_\mu$ is the gauge field, and the integration is performed over a non-trivial cycle on $S_b^3/{\mathbb Z}_r$.
It is interesting that the integral identity (\ref{decorationequation}) is studied as a solution to the star-square relation \cite{Mullahasanoglu:2023nes} and its $r=1$ case is written as the star-triangle relation in \cite{Kels:2018xge}. 

To obtain the solution of the decoration transformation, one can change the variables as 
\begin{align}
    	\begin{aligned}
	a_{1,2} & =-\alpha_1\pm x_{1}\;, ~~~~~~~~~~~~~\; a_{3,4}=-\alpha_2\pm x_{2}  \;, \\
	u_{1,2} & =-\beta_1\pm m_{1}\;, ~~~~~~~~~~~~~\; v_{3,4}=-\beta_2\pm m_{2}\:,
	\end{aligned}
	\end{align}
and the Boltzmann weights becomes
\begin{align}
\begin{aligned} \label{boltamannweightSU2}
    W_{\alpha_i,\beta_i}(x_i,x_j,m_i,m_j) &=    
    \gamma_h(-\alpha_i\pm x_i\pm x_j,-\beta_i\pm m_i\pm  m_j;\omega_{1},\omega_2)    \;,
\end{aligned}
	\end{align}
where $\omega_1, \omega_2$ are introduced as temperature-like parameters. The Boltzmann weight (\ref{boltamannweightSU2}) also solves the star-triangle relation \cite{Gahramanov:2016ilb} and the star-star relation \cite{Mullahasanoglu:2021xyf} of the corresponding integrable model. There is also the spin-independent function which can be normalized but we left it to be in the same notation as previous works 
\begin{align}
    \begin{aligned}
R(\alpha_1,\alpha_2,\beta_1,\beta_2)=&
      \gamma_h\left(\omega +2(\alpha_1+\alpha_2) ,2(\beta_1+\beta_2); \omega_1 , \omega_2\right)
      \\ &\times
\prod_{i=1}^2\gamma_h(-2\alpha_i ,-2\beta_i; \omega_1 , \omega_2)\:.
\label{spindep}
    \end{aligned}
\end{align}
and the self-interaction term due to the external field
\begin{align}
    S(x_0,m_0)=\frac{1}{\gamma_h(\pm 2x_0,\pm 2m_0;\omega_1,\omega_2)}\:.
\end{align}

After the change of variables, the integral identity (\ref{decorationequation}) takes the following form
	\begin{align}
     \begin{aligned}
	\sum_{m_0=-[ r/2 ]}^{\lfloor r/2\rfloor} \int_{-\infty}^\infty S(x_0,m_0)&\prod_{i=1}^2 W_{\alpha_i,\beta_i}(x_{i},x_0,m_i,m_0)
 \frac{dx_0}{r\sqrt{-\omega_{1}\omega_{2}}}\\ &=R(\alpha_1,\alpha_2,\beta_1,\beta_2) \: W_{\alpha_1+\alpha_2,\beta_1+\beta_2}(x_{1},x_{2},m_1,m_2) \;,
\label{decoratiosolution}
 \end{aligned}
	\end{align}
where it is shown that the equality of partition functions of the supersymmetric gauge theories can be studied as the solution of the decoration transformation.

\subsubsection{ \texorpdfstring{$U(1)$}{U(1)} gauge theory}

We construct the decoration transformation for the generalized Faddeev-Volkov model. From a supersymmetric gauge theory perspective, the integral identity which will be investigated as the decoration transformation is the equality of three-dimensional $\mathcal{N}=2$ supersymmetric gauge theories on $S_b^3/{\mathbb Z}_r$ and one of the dual theories has the $U(1)$ gauge symmetry. The equality is obtained by breaking the gauge symmetry from $SU(2)$ to $U(1)$ but the integral identity is first discussed in \cite{Bag:2022kla} as a flavor limit of the duality of $\mathcal{N}=2$ supersymmetric gauge theories. 

In this $U(1)$ gauge duality, one of the dual theories has the $U(1)$ gauge symmetry and $SU(2)_L \times SU(2)_R$ flavor symmetry without the Chern-Simons term. In this theory, two chiral multiplets belong to the fundamental representation but the other two belong to the anti-fundamental representation of the flavor group. However, its dual theory has no gauge symmetry but the same global symmetry where those four chiral multiplets transform in the fundamental representation of the flavor group $SU(2)_L \times SU(2)_R$.

From a statistical mechanics point of view, the certain case ($r=1$) of the equality of partition functions of the gauge theories is also considered as the solution to the star-triangle relation in \cite{Bazhanov:2015gra, Kels:2018xge}. 

However, here, we show that the following integral identity can be written as the solution to the decoration transformation \cite{Naya1954, Fisher1959, Syozi:1980iw} as well, 
 \begin{align}
\begin{aligned}
\label{decorationequation2}
    &\sum_{y=-[ r/2 ]}^{[ r/2 ]}  \int _{-\infty}^{\infty} \prod_{i=1}^2  e^{i\pi \big(\frac{za_i}{\omega_1 \omega_2r} 
    + \frac{yu_i}{r}\big) } \gamma_h(a_i-z,u_i- y;\omega_1,\omega_2)
    \\ & \quad \quad \quad\quad\quad\quad\quad
    \times  
    e^{i\pi \big(\frac{zb_i}{\omega_1 \omega_2r} 
    + \frac{yv_i}{r}\big) }
    \gamma_h(b_i+ z,v_i+y;\omega_1,\omega_2)
\frac{dz}{r\sqrt{-\omega_1\omega_2}}
 \\
&\quad\quad=\gamma_h\left(\omega -\sum_{i=1}^2 (a_i+b_i),r-\sum_{i=1}^2 (u_i+v_i); \omega_1 , \omega_2\right)   
   \\ &\quad \quad\quad\times 
    \prod_{i,j=1}^2 \gamma_h(a_i+ b_j,u_i+ v_j;\omega_1,\omega_2)
    \prod_{i,j=1,\:i\neq j}^2 e^{\frac{i\pi}{2} \big(\frac{(a_i+b_i)(a_j-b_j)}{\omega_1 \omega_2 r} + \frac{(u_i+v_i)(u_j-v_j)}{r}  \big)}\;. 
\end{aligned}
\end{align}	

We apply the following change of variables 
\begin{align}
    	\begin{aligned}
	a_i & =-\alpha_i+x_{i}\;, ~~~~~~~~~~~~~\; b_{i}=-\alpha_i-x_{i}  \;, \\
	u_i & =-\beta_i+y_{i}\;, ~~~~~~~~~~~~~\; v_{i}=-\beta_i-y_{i}\:.
	\end{aligned}
	\end{align}
then, the Boltzmann weights becomes
\begin{align}
\begin{aligned} \label{boltamannweightU1}
    W_{\alpha_i,\beta_i}(x_i,x_j,y_i,y_j) &=    
    \gamma_h(-\alpha_i+x_i- x_j,-\beta_i+y_i- y_j;\omega_{1},\omega_2) \\&\times 
    \gamma_h(-\alpha_i-x_i+ x_j,-\beta_i-y_i+ y_j;\omega_{1},\omega_2)   \;,
\end{aligned}
	\end{align}
and the spin-independent function is the same as (\ref{spindep}). The Boltzmann weight (\ref{boltamannweightU1}) also solves the star-triangle relation \cite{Bozkurt:2020gyy} and the star-star relation \cite{Catak:2021coz} of the generalized Faddeev-Volkov model.
 
The self-interactions for secondary spins (unfilled spin in Fig.\ref{decorationfigure}) at the decorated lattice model is
\begin{align}
   S(x_0,y_0)= e^{-2i\pi \big(\frac{x_0(\alpha_1+\alpha_2)}{\omega_1 \omega_2r} 
        + \frac{y_0(\beta_1+\beta_2)}{r}\big) }  \:,
\end{align}
and the self-interactions for the spins at the right-hand side of the Fig.\ref{decorationfigure}
\begin{align}
   S(x_1,y_1)=e^{-2i\pi \big(\frac{x_1\alpha_2}{\omega_1 \omega_2r} 
        + \frac{y_1\beta_2}{r}\big) }  \:, \quad\quad   S(x_2,y_2)=e^{-2i\pi \big(\frac{x_2\alpha_1}{\omega_1 \omega_2r} 
        + \frac{y_2\beta_1}{r}\big) }  \:.
\end{align}
If one substitutes the definition of Boltzmann weights in the integral identity (\ref{decorationequation2}), one obtains the decoration transformation for the generalized Faddeev-Volkov model.
\subsubsection{Gauge symmetry breaking}\label{gsb1}

We apply the gauge symmetry breaking method \cite{Spiridonov:2010em, Bozkurt:2020gyy} to acquire the integral identity (\ref{decorationequation2}) which is previously obtained by reducing the number of flavors of the dual supersymmetric gauge theories in \cite{Bag:2022kla}. The equality of partition functions of supersymmetric theories with $SU(2)$ gauge symmetry (\ref{decorationequation}) is reduced to the integral identity (\ref{decorationequation2}) standing for the duality of the supersymmetric theories with $U(1)$ gauge symmetry. Hence the idea of the gauge symmetry breaking method is the reduction of the gauge symmetry from the $SU(2)$ gauge group to the $U(1)$ gauge group. It is also important to remark that the flavor symmetry group is also broken from $SU(4)$ to $SU(2)\times SU(2)$ during the gauge symmetry breaking.

We also apply the procedure to make it evident that the application of the gauge symmetry breaking allows us to reach another solution of the relations for lattice spin models in statistical mechanics studied such as the star-triangle relation \cite{Spiridonov:2010em, Bozkurt:2020gyy} and the star-star relation \cite{Mullahasanoglu:2021xyf}. The Boltzmann weights discussed above as the solution to the decoration transformation do not have a direct reduction from (\ref{boltamannweightSU2}) to (\ref{boltamannweightU1}). Therefore, it is interesting that the gauge symmetry-breaking method plays a direct role between relations but not the Boltzmann weights of the lattice spin models.

The gauge symmetry-breaking protocol will be applied step-by-step as the following 

\begin{itemize}
    \item The boundaries of the integrals will be changed from $\{-\infty,\infty \}$ to $\{0,\infty\}$ due to the symmetry $z \leftrightarrow -z$ and the integration variable is translated $z\to z+\mu$.
    \item The flavor fugacities are reparametrized as $a_i$ with $a_i+\mu$ for $i=1,2$ and $a_i-\mu$ for  $i=3,4$ \:.
    \item Then the limit $\mu\to \infty$ is taken and the asymptotic properties of the hyperbolic hypergeometric functions (\ref{asymptotics}) are used.
    \item Finally, the exponents are organized and some fugacities are relabeled as $a_{3,4}=b_{1,2}$ and $u_{3,4}=v_{1,2}$\:.
\end{itemize}

\subsubsection{Kagome lattice model}

The decoration transformation with the star-triangle relation is firstly used to solve the Kagome lattice model \cite{Naya1954}. The importance of the Kagome model is that it has the same number of interactions for each site as a square lattice but they have different critical points. It is just a certain example of many other extended lattices by the use of symmetry transformations \cite{Fisher1959, Syozi:1980iw}.

The Kagome lattice as depicted in Fig.\ref{kagome} can be obtained by the use of the star-triangle relation and the decoration transformation. One can start with the honeycomb lattice and apply decoration transformation. At each bond of the honeycomb lattice, there will appear secondary spin presented in the middle of Fig.\ref{kagome}. Then the central spins (spins of the honeycomb lattice model) will be integrated by the use of the star-triangle relation. Finally, we reach the Kagome lattice in which there are four edge interactions for each spin.


\begin{figure}[tbh]
\centering
\begin{tikzpicture}[scale=0.5]

\begin{scope}[xshift=300pt]

\draw[-,very thick] (-1.35,-2.45)--(0,-1.75);
\draw[-,very thick] (1.35,-2.45)--(0,-1.75);

\draw[-,very thick] (-1.35,2.45)--(0,1.75);
\draw[-,very thick] (1.35,2.45)--(0,1.75);

\draw[-,very thick] (-1.52,0.85)--(0,1.75);
\draw[-,very thick] (-1.52,-0.85)--(0,-1.75);
\draw[-,very thick] (-1.52,-0.85)--(-1.35,-2.45);
\draw[-,very thick] (-1.52,0.85)--(-1.35,2.45);
\draw[-,very thick] (-1.52,-0.85)--(-2.65,0);
\draw[-,very thick] (-1.52,0.85)--(-2.65,0);
\draw[-,very thick] (-1.52,0.85)--(-1.52,-0.85);

\draw[-,very thick] (1.52,0.85)--(0,1.75);
\draw[-,very thick] (1.52,-0.85)--(0,-1.75);
\draw[-,very thick] (1.52,-0.85)--(1.35,-2.45);
\draw[-,very thick] (1.52,0.85)--(1.35,2.45);
\draw[-,very thick] (1.52,-0.85)--(2.65,0);
\draw[-,very thick] (1.52,0.85)--(2.65,0);
\draw[-,very thick] (1.52,0.85)--(1.52,-0.85);

\filldraw[fill=white,draw=black] (-1.35,-2.45) circle (4.0pt)
node[below=1pt]{\small };
\filldraw[fill=white,draw=black] (1.35,-2.45) circle (4.0pt)
node[below=1pt]{\small };
\filldraw[fill=white,draw=black] (2.65,0) circle (4.0pt)
node[right=1pt]{\small };
\filldraw[fill=white,draw=black] (-1.35,2.45) circle (4.0pt)
node[above=1pt]{\small };
\filldraw[fill=white,draw=black] (1.35,2.45) circle (4.0pt)
node[above=1pt]{\small };
\filldraw[fill=white,draw=black] (-2.65,0) circle (4.0pt)
node[left=1pt]{\small };

\filldraw[fill=white,draw=black] (-1.52,-0.85) circle (4.0pt)
node[below=1pt]{\small };
\filldraw[fill=white,draw=black] (-1.52,0.85) circle (4.0pt)
node[below=1pt]{\small };
\filldraw[fill=white,draw=black] (1.52,-0.85) circle (4.0pt)
node[right=1pt]{\small };
\filldraw[fill=white,draw=black] (1.52,0.85) circle (4.0pt)
node[above=1pt]{\small };
\filldraw[fill=white,draw=black] (0,-1.75) circle (4.0pt)
node[above=1pt]{\small };
\filldraw[fill=white,draw=black] (0,1.75) circle (4.0pt)
node[left=1pt]{\small };

\filldraw[fill=black,draw=black] (-15,0)
node[left=1pt]{= };

\end{scope}

\begin{scope}[xshift=0pt]

\draw[-,very thick] (2,0)--(1,1.73);
\draw[-,very thick] (1,1.73)--(1.70,3.10);
\draw[-,very thick] (1,-1.73)--(2,0);
\draw[-,very thick] (1,-1.73)--(1.70,-3.10);
\draw[-,very thick] (3.30,0)--(2,0);
\draw[-,very thick] (-1,1.73)--(-2,0);
\draw[-,very thick] (-1,-1.73)--(-2,0);
\draw[-,very thick] (-1,-1.73)--(-1.70,-3.10);
\draw[-,very thick] (-1,1.73)--(-1.70,3.10);
\draw[-,very thick] (-3.30,0)--(-2,0);
\draw[-,very thick] (-1,-1.73)--(1,-1.73);
\draw[-,very thick] (-1,1.74)--(1,1.74);

\filldraw[fill=black,draw=black] (-1.7,-3.1) circle (2.0pt)
node[below=1pt]{\small $\sigma_E$};
\filldraw[fill=black,draw=black] (1.7,-3.1) circle (2.0pt)
node[below=1pt]{\small $\sigma_D$};
\filldraw[fill=black,draw=black] (3.3,0) circle (2.0pt)
node[right=1pt]{\small $\sigma_C$};
\filldraw[fill=black,draw=black] (1.7,3.1) circle (2.0pt)
node[above=1pt]{\small $\sigma_B$};
\filldraw[fill=black,draw=black] (-1.7,3.10) circle (2.0pt)
node[above=1pt]{\small $\sigma_A$};
\filldraw[fill=black,draw=black] (-3.3,0) circle (2.0pt)
node[left=1pt]{\small $\sigma_F$};

\filldraw[fill=white,draw=black] (-1.35,-2.45) circle (4.0pt)
node[below=1pt]{\small };
\filldraw[fill=white,draw=black] (1.35,-2.45) circle (4.0pt)
node[below=1pt]{\small };
\filldraw[fill=white,draw=black] (2.65,0) circle (4.0pt)
node[right=1pt]{\small };
\filldraw[fill=white,draw=black] (-1.35,2.45) circle (4.0pt)
node[above=1pt]{\small };
\filldraw[fill=white,draw=black] (1.35,2.45) circle (4.0pt)
node[above=1pt]{\small };
\filldraw[fill=white,draw=black] (-2.65,0) circle (4.0pt)
node[left=1pt]{\small };

\filldraw[fill=white,draw=black] (-1.52,-0.85) circle (4.0pt)
node[below=1pt]{\small };
\filldraw[fill=white,draw=black] (-1.52,0.85) circle (4.0pt)
node[below=1pt]{\small };
\filldraw[fill=white,draw=black] (1.52,-0.85) circle (4.0pt)
node[right=1pt]{\small };
\filldraw[fill=white,draw=black] (1.52,0.85) circle (4.0pt)
node[above=1pt]{\small };
\filldraw[fill=white,draw=black] (0,-1.75) circle (4.0pt)
node[above=1pt]{\small };
\filldraw[fill=white,draw=black] (0,1.75) circle (4.0pt)
node[left=1pt]{\small };

\filldraw[fill=black,draw=black] (-1,-1.73) circle (2.0pt)
node[below=1pt]{\small };
\filldraw[fill=black,draw=black] (1,-1.73) circle (2.0pt)
node[below=1pt]{\small };
\filldraw[fill=black,draw=black] (2,0) circle (2.0pt)
node[right=1pt]{\small };
\filldraw[fill=black,draw=black] (1,1.73) circle (2.0pt)
node[above=1pt]{\small };
\filldraw[fill=black,draw=black] (-1,1.73) circle (2.0pt)
node[above=1pt]{\small };
\filldraw[fill=black,draw=black] (-2,0) circle (2.0pt)
node[left=1pt]{\small };
\filldraw[fill=black,draw=black] (6,0)
node[left=1pt]{= };

\end{scope}

\begin{scope}[xshift=-300pt]

\draw[-,very thick] (2,0)--(1,1.73);
\draw[-,very thick] (1,1.73)--(1.70,3.10);
\draw[-,very thick] (1,-1.73)--(2,0);
\draw[-,very thick] (1,-1.73)--(1.70,-3.10);
\draw[-,very thick] (3.30,0)--(2,0);
\draw[-,very thick] (-1,1.73)--(-2,0);
\draw[-,very thick] (-1,-1.73)--(-2,0);
\draw[-,very thick] (-1,-1.73)--(-1.70,-3.10);
\draw[-,very thick] (-1,1.73)--(-1.70,3.10);
\draw[-,very thick] (-3.30,0)--(-2,0);
\draw[-,very thick] (-1,-1.73)--(1,-1.73);
\draw[-,very thick] (-1,1.74)--(1,1.74);

\filldraw[fill=black,draw=black] (-1.7,-3.1) circle (2.0pt)
node[below=1pt]{\small $\sigma_E$};
\filldraw[fill=black,draw=black] (1.7,-3.1) circle (2.0pt)
node[below=1pt]{\small $\sigma_D$};
\filldraw[fill=black,draw=black] (3.3,0) circle (2.0pt)
node[right=1pt]{\small $\sigma_C$};
\filldraw[fill=black,draw=black] (1.7,3.1) circle (2.0pt)
node[above=1pt]{\small $\sigma_B$};
\filldraw[fill=black,draw=black] (-1.7,3.10) circle (2.0pt)
node[above=1pt]{\small $\sigma_A$};
\filldraw[fill=black,draw=black] (-3.3,0) circle (2.0pt)
node[left=1pt]{\small $\sigma_F$};

\filldraw[fill=black,draw=black] (-1,-1.73) circle (2.0pt)
node[below=1pt]{\small };
\filldraw[fill=black,draw=black] (1,-1.73) circle (2.0pt)
node[below=1pt]{\small };
\filldraw[fill=black,draw=black] (2,0) circle (2.0pt)
node[right=1pt]{\small };
\filldraw[fill=black,draw=black] (1,1.73) circle (2.0pt)
node[above=1pt]{\small };
\filldraw[fill=black,draw=black] (-1,1.73) circle (2.0pt)
node[above=1pt]{\small };
\filldraw[fill=black,draw=black] (-2,0) circle (2.0pt)
node[left=1pt]{\small };

\end{scope}

\end{tikzpicture}
\caption{Constructing Kagome lattice from the hexagonal lattice by the use of decoration transformation and the star-triangle relation, respectively.}
\label{kagome}
\end{figure}
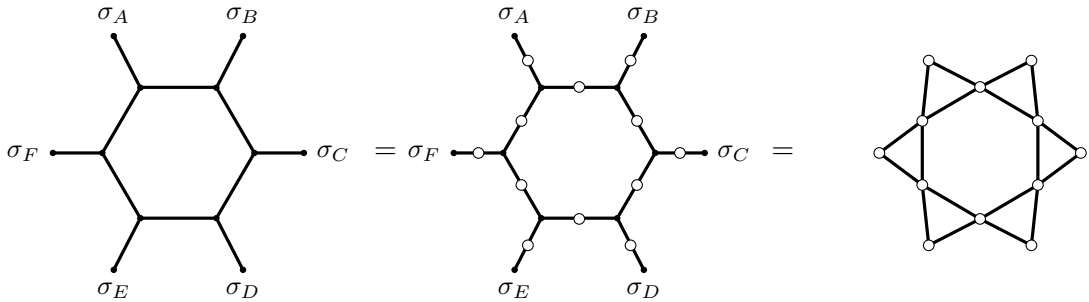

The investigation of the associated quiver diagram of the Kagome lattice is also similar and interesting. The details of the derivation are left to further studies. 
\subsection{The flipping relation}
From an integrability perspective, we will study the reduced version of the star-star relation. The reduction can be seen as certain Boltzmann weights in the star-star relation are eliminated or taken trivially. We name\footnote{We also note similar discussions and possible relations with the fields the 2-2 Pachner moves \cite{PACHNER1991129, Dimofte:2011ju, Nagao:2011aa, Terashima_2014} and the supersymmetric gauge theories \cite{Cheng_2023, Bajeot:2023gyl, Razamat:2019sea, Aprile:2018oau}.} it as a "flipping relation", depicted in Fig.\ref{flippingfigure} since two spins standing opposite sites interacting with the centered spin exchange the interactions after making use of the relation. 


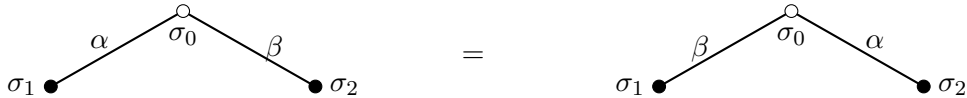
\begin{figure}[tbh]
\centering
\begin{tikzpicture}[scale=2]

\draw[-,thick] (-2,0)--(-2.87,-0.5);
\draw[-,thick] (-2,0)--(-1.13,-0.5);

\filldraw[fill=white,draw=black] (-2,0) circle (1.2pt)
node[below=2.5pt]{\color{black} $\sigma_0$};
\filldraw[fill=black,draw=black] (-2.87,-0.5) circle (1.2pt)
node[left=1.5pt] {\color{black} $\sigma_1$};
\filldraw[fill=black,draw=black] (-1.13,-0.5) circle (1.2pt)
node[right=1.5pt] {\color{black} $\sigma_2$};

\filldraw[fill=white,draw=black] (-2.55,-0.05) 
node[below=2.5pt]{\color{black} $\alpha$};
\filldraw[fill=white,draw=black] (-1.4,-0.05) 
node[below=2.5pt]{\color{black} $\beta$};

\fill[white!] (0.05,-0.3) circle (0.01pt)
node[left=0.05pt] {\color{black}$=$};

\draw[-,thick] (2,0)--(1.13,-0.5);
\draw[-,thick] (2,0)--(2.87,-0.5);

\filldraw[fill=white,draw=black] (2,0) circle (1.2pt)
node[below=2.5pt]{\color{black} $\sigma_0$};
\filldraw[fill=black,draw=black] (1.13,-0.5) circle (1.2pt)
node[left=1.5pt] {\color{black} $\sigma_1$};
\filldraw[fill=black,draw=black] (2.87,-0.5) circle (1.2pt)
node[right=1.5pt] {\color{black} $\sigma_2$};

\filldraw[fill=white,draw=black] (2.55,-0.05) 
node[below=2.5pt]{\color{black} $\alpha$};
\filldraw[fill=white,draw=black] (1.4,-0.05) 
node[below=2.5pt]{\color{black} $\beta$};

\end{tikzpicture}
\caption{The flipping relation.}
\label{flippingfigure}
\end{figure}

\begin{align}
   \sum_{m_0} \int dx_0\: S(\sigma_0) W_{\alpha}(\sigma_1,\sigma_0)W_{\beta}(\sigma_2,\sigma_0)
   = \sum_{y_0} \int dz_0\: S(\sigma_0) W_{\beta}(\sigma_1,\sigma_0)W_{\alpha}(\sigma_2,\sigma_0)
   \label{flippingdefinition},
\end{align}

The flipping relation is discussed for IRF and vertex models as a trivial solution to the star-triangle relation since it represents the self-commuting transfer matrices \cite{BAXTER1986321}. We will already discuss the derivation of the decoration and the flipping relation for IRF models but the reduction of IRF flipping relation to the edge interacting flipping relation like a reduction from IRF-YBE to the star-star relation \cite{Bazhanov1992} will be left as a further studies.


We will study the duality of supersymmetric gauge theories in which both dual theories possess $SU(2)$ gauge symmetries. The equality of partition functions is obtained by the double integral method first used in \cite{van2007hyperbolic}, see also \cite{Catak:2021coz}. The integral identity (\ref{decorationequation}) provides us the following transformation formula 

\begin{align}
      \frac{1}{2r\sqrt{-\omega_1\omega_2}} &\sum_{y=-[ r/2 ]}^{[ r/2 ]} \int _{-\infty}^{\infty} dx\frac{\prod_{i=1}^4\gamma_h(a_i\pm z,u_i\pm y;\omega_1,\omega_2)}{\gamma_h(\pm 2z,\pm 2y;\omega_1,\omega_2)}
\nonumber\\ 
&=\frac{\gamma_h(a_1+a_2,u_1+u_2;\omega_1,\omega_2)\gamma_h(a_3+a_4,u_3+u_4;\omega_1,\omega_2)}{\gamma_h(\Tilde{a}_1+\Tilde{a}_2,\Tilde{u}_1+\Tilde{u}_2;\omega_1,\omega_2)\gamma_h(\Tilde{a}_3+\Tilde{a}_4,\Tilde{u}_3+\Tilde{u}_4;\omega_1,\omega_2)}
\nonumber\\ &\times
      \frac{1}{2r\sqrt{-\omega_1\omega_2}} \sum_{m=-[ r/2 ]}^{[ r/2 ]} \int _{-\infty}^{\infty} dx\frac{\prod_{i=1}^4\gamma_h(\Tilde{a}_i\pm x,\Tilde{u}_i\pm m;\omega_1,\omega_2)}{\gamma_h(\pm 2x,\pm 2m;\omega_1,\omega_2)}
      \label{flippingequation1}
\end{align}
where we also redefine some parameters together with the modification
\begin{align}
    \Tilde{a}_{1,2}=a_{1,2}+s \:,\quad
\Tilde{a}_{3,4}=a_{3,4}-s \:,\quad
  \Tilde{u}_{1,2}= u_{1,2}+p \:,\quad
\Tilde{u}_{3,4}=u_{3,4}-p \:,
\label{tilde1}
\end{align}
with $s$ and $p$ which are chemical potentials for continuous and discrete fugacities, respectively.

The integral identity (\ref{flippingequation1}) can be studied as the flipping relation for Boltzmann weights (\ref{boltamannweightSU2}) under the proper change of the variables.

The same double integral method is also used for the Seiberg duality with $U(1)$ gauge symmetry (\ref{decorationequation2}). Then the result is the flipping relation for the Boltzmann weight (\ref{boltamannweightU1}) and appears as
\begin{align}
      \frac{1}{r\sqrt{-\omega_1\omega_2}} \sum_{y=-[ r/2 ]}^{[ r/2 ]} \int _{-\infty}^{\infty} dz\prod_{i=1}^2
      e^{\pi i\left(
\frac{z(a_i+b_i)}{\omega_{1}\omega_2r}+\frac{y(u_i+v_i)}{r} \right)}
\nonumber\\
\times \gamma_h(a_i- z,u_i-y;\omega_1,\omega_2)\gamma_h(b_i+ z,v_i+ y;\omega_1,\omega_2)
\nonumber\\ 
=\frac{e^{\frac{\pi i}{2}\left(
\frac{(a_2+b_2)(a_1-b_1)}{\omega_{1}\omega_2r}+\frac{(u_2+v_2)(u_1-v_1)}{r} \right)}}{e^{\frac{\pi i}{2}\left(
\frac{(a_1+b_1)(a_2-b_2)}{\omega_{1}\omega_2r}+\frac{(u_1+v_1)(u_2-v_2)}{r} \right)}}
\nonumber\\ \times
\frac{\gamma_h(a_1+b_1,u_1+v_1;\omega_1,\omega_2)\gamma_h(a_2+b_2,u_2+v_2;\omega_1,\omega_2)}{\gamma_h(\Tilde{a}_1+\Tilde{b}_1,\Tilde{u}_1+\Tilde{v}_1;\omega_1,\omega_2)\gamma_h(\Tilde{a}_2+\Tilde{b}_2,\Tilde{u}_2+\Tilde{v}_2;\omega_1,\omega_2)}
\nonumber\\ \times
      \frac{1}{r\sqrt{-\omega_1\omega_2}} \sum_{m=-[ r/2 ]}^{[ r/2 ]} \int _{-\infty}^{\infty} dx \prod_{i=1}^2e^{\pi i\left(
\frac{x(\Tilde{a}_i+\Tilde{b}_i)}{\omega_{1}\omega_2r}+\frac{m(\Tilde{u}_i+\Tilde{v}_i)}{r} \right)}
\nonumber\\
\times \gamma_h(\Tilde{a}_i- x,\Tilde{u}_i- m;\omega_1,\omega_2)
      \gamma_h(\Tilde{b}_i+ x,\Tilde{v}_i+m;\omega_1,\omega_2)\:,
      \label{flippingequation2}
\end{align}
where
\begin{align}
\begin{aligned}
    \tilde{a}_1= a_1+s\:,\quad
\tilde{a}_2= a_2-s\:,\quad
   \tilde{b}_1= b_1+s\:,\quad
\tilde{b}_2= b_2-s\:,
\\
    \tilde{u}_1= u_1+p\:,\quad
\tilde{u}_2= u_2-p\:,\quad
   \tilde{v}_1= v_1+p\:,\quad
\tilde{v}_2= v_2-p\:,
\end{aligned}\label{tilde2}
\end{align}
with $s\in \mathbb{C}$ and $p\in \mathbb{Z}$.

The gauge symmetry breaking is also applied to the integral identity (\ref{flippingequation1}) which is the duality of supersymmetric theories that has $SU(2)$ gauge symmetry in both theories. It is again a reduction from $SU(2)$ to $U(1)$ gauge symmetry group but for two dual theories. And also  $SU(4)$ flavor symmetry group will be reduced to $SU(2)\times SU(2)$. With some differences from Section \ref{gsb1}, the gauge symmetry-breaking procedure is similar.


\section{Bailey pairs}
\subsection{The decoration transformations}
In this section, we construct Bailey pairs for the decoration transformation. However, this construction can be seen as the reduced version of the star-triangle relation and one can note that constructing the star-triangle relation in terms of hyperbolic hypergeometric functions appeared in \cite{Gahramanov:2022jxz} seems as the generalized version of the decoration transformation in the aspect of sequences of functions. 

We will emphasize throughout the construction the details of the reduction of the star-triangle relation into the decoration transformation.

\begin{definition}
Two functions $\alpha(x,m;t,p)$ and $\beta(x,m; t,p)$, where $x, t\in \mathbb{C}$ and $m,p \in \mathbb{Z}$, form an integral hyperbolic hypergeometric Bailey pair with respect to $t$ and $p$ if the functions satisfy
\begin{equation}\label{definition}
    \beta(z,m;t,p) = M(t,p)_{z,m;x,j}\alpha(x,j;t,p) \;, 
\end{equation}
where $M(t,p)_{z,m;x,j}$ is an operator integrating over $x\in \mathbb{C}$ and summing over $j\in \mathbb{Z}$, which also called an integral-sum operator. 
\end{definition}

We also assume an identity operator $ I(s,q)$ with $s \in \mathbb{C}$ and $q\in \mathbb{Z}$, such that the operator attaches the new variables to functions in which it satisfies the relation \begin{equation}
    I(s,q)  I(-s,-q)=1\:, \quad \& \quad  I(0,0)=1\:.
\end{equation}
By the definition it also satisfies 
\begin{equation}
\begin{gathered}
       I(t,p)  I(s,q)=   I(s+t,q+p)   
      \:.
\end{gathered}
\label{homo}
\end{equation}
Suppose that the operators $M$ and $I$ satisfy the "star-triangle relation"
\begin{equation}
\label{strI}
\begin{gathered}
       M(s,q)_{w,k; z,m}  I(s+t,q+p)  M(t,p)_{z,m; x,j} = I(t,p) M(s+t,q+p)_{w,k; x,j}  I(s,q)\:,
\end{gathered}
\end{equation}
where $w \in \mathbb{C}$ and $k\in \mathbb{Z}$.

However, due to the triviality of the operator $I$ and the relation (\ref{homo}), the star-triangle relation (\ref{strI}) reduces and can be read as the "decoration transformation"
\begin{equation}
\label{operatordecoration}
\begin{gathered}
       M(s,q)_{w,k; z,m}  M(t,p)_{z,m; x,j} =  M(s+t,q+p)_{w,k; x,j}  \:.
\end{gathered}
\end{equation}

\begin{lemma}[Bailey Lemma]
When $\alpha(x,m;t,p)$ and $\beta(x,m; t,p)$ form an integral hyperbolic hypergeometric Bailey pair with respect to $t$ and $p$, the novel function as a part of the sequences of functions $\beta'(x,k; t+s,p+q)$  and reparametrized function $\alpha'(x,k;t+s,p+q)$ defined by 
\begin{align}\label{defining}
    \alpha'(x,k;t+s,p+q) &=I(s,q) \alpha(x,k; t,p) \:,\\
    \beta'(x,k; t+s,p+q)&=I(-s,-q) M(s,q)_{x,k;z,m} I(t+s,p+q)\beta(z,m; t,p)\:,
\end{align}
form an integral hyperbolic hypergeometric Bailey pair with respect to the new parameters $t+s$ and $p+q$.
\end{lemma}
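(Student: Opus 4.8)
The plan is to check the only thing that genuinely needs checking: that the primed functions obey the defining Bailey relation (\ref{definition}) with the shifted parameters, i.e. that $\beta'(w,k;t+s,p+q)=M(t+s,p+q)_{w,k;x,j}\,\alpha'(x,j;t+s,p+q)$. Everything else in the statement is definitional, so the argument is a single operator computation whose engine is the operator star-triangle relation (\ref{strI}); the hypotheses enter only to supply the factorization $\beta=M(t,p)\alpha$ and the algebra of the operator $I$ encoded in (\ref{homo}).

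First I would insert the Bailey-pair hypothesis $\beta(z,m;t,p)=M(t,p)_{z,m;x,j}\,\alpha(x,j;t,p)$ into the definition of $\beta'$, rewriting $\beta'(w,k;t+s,p+q)$ as the composite $I(-s,-q)\,M(s,q)_{w,k;z,m}\,I(t+s,p+q)\,M(t,p)_{z,m;x,j}\,\alpha(x,j;t,p)$. The decisive observation is that the inner block $M(s,q)_{w,k;z,m}\,I(t+s,p+q)\,M(t,p)_{z,m;x,j}$ is exactly the left-hand side of (\ref{strI}) under the identification $s+t=t+s$, $q+p=p+q$; applying the star-triangle relation replaces it by $I(t,p)\,M(s+t,q+p)_{w,k;x,j}\,I(s,q)$. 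At this stage the factor $I(s,q)$ sitting on the integration leg $(x,j)$ acts on $\alpha$, and by the very definition of the reparametrized function, $I(s,q)\,\alpha(x,j;t,p)=\alpha'(x,j;t+s,p+q)$, so the right end already carries the desired form. It then remains to collapse the $I$-operators left acting on the external leg $(w,k)$: the outer $I(-s,-q)$ from the definition of $\beta'$ is precisely tailored to absorb, via the homomorphism law (\ref{homo}) and the inverse relation $I(-s,-q)I(s,q)=I(0,0)=1$, the spin-independent $I$-dressing that (\ref{strI}) produces on $(w,k)$, leaving exactly $M(t+s,p+q)_{w,k;x,j}\,\alpha'(x,j;t+s,p+q)$, which is the claim.

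The step I expect to be the main obstacle is the bookkeeping of which space each operator acts on—distinguishing the external spin $(w,k)$ from the contracted spin $(z,m)$ and the integration spin $(x,j)$—since the whole construction rests on the $I$-dressings on the external leg cancelling, and a careless ordering of the operators would leave a spurious spin-independent prefactor and spoil the pair. As a consistency check one may rerun the computation through the reduced \emph{decoration transformation} (\ref{operatordecoration}) rather than the full (\ref{strI}): because $I$ is trivial, the two $M$'s simply compose, $M(s,q)_{w,k;z,m}M(t,p)_{z,m;x,j}=M(s+t,q+p)_{w,k;x,j}$, and the same identification of $\alpha'$ together with the collapse of the $I$-factors reproduces the result, confirming that this lemma is the Bailey-lemma avatar of the decoration transformation rather than of the genuine star-triangle relation.
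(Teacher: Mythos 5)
Your overall strategy---substitute the Bailey-pair hypothesis $\beta = M(t,p)\alpha$ into the definition of $\beta'$, hit the resulting composite with the operator relation, and cancel the $I$-dressings---is exactly the paper's: its proof performs the same substitution and then invokes ``the properties of the $I$ operator'' to reduce the claim to the decoration transformation (\ref{operatordecoration}), which is guaranteed by the integral identity (\ref{decorationequation}). Indeed, your closing ``consistency check'' through (\ref{operatordecoration}) \emph{is} the paper's actual argument, not merely a check.

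However, the main mechanism you describe contains precisely the slip you warned against. After applying (\ref{strI}), the dressing left on the external leg $(w,k)$ is $I(t,p)$, so the total prefactor of $M(t+s,p+q)_{w,k;x,j}$ is $I(-s,-q)\,I(t,p) = I(t-s,p-q)$ by (\ref{homo}), which is \emph{not} the identity for generic $s\neq t$, $q\neq p$. The inverse relation $I(-s,-q)I(s,q)=1$ is never available here, because (\ref{strI}) produces its $I(s,q)$ on the integration leg $(x,j)$---where you correctly absorbed it into $\alpha'$---and never on $(w,k)$. The cancellation you want works only if the outer factor in the definition of $\beta'$ is $I(-t,-p)$ rather than $I(-s,-q)$; this is in fact what the paper writes both inside its own proof and when it reuses the construction in the flipping-relation section, so the $I(-s,-q)$ in the lemma statement is an internal inconsistency (typo) of the paper. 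With $I(-t,-p)$ your star-triangle route closes cleanly: the residual factor is $I(-t,-p)I(t,p)=1$. With $I(-s,-q)$ as written, one must instead fall back on the paper's blanket ``triviality'' of $I$, under which every $I$-factor drops out and the claim collapses directly to (\ref{operatordecoration})---which is your secondary route, and the proof the paper actually gives.
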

  \begin{proof}
    Recall the definition of Bailey pairs (\ref{definition}) for the functions $\alpha'(x,j;t+s,p+q)$ and $\beta'(x,k; t+s,p+q)$ 
    \begin{equation}
    \beta'(w,k;t+s,p+q) = M(t+s,p+q)_{w,k;x,j}\alpha'(x,j;t+s,p+q)\:, 
\end{equation}
Substitute the functions into the relation (\ref{defining}) defining a Bailey pair 
\begin{gather}
    I(-t,-p)M(s,q)_{w,k;z,m}I(s+t,p+q) \beta(z,m; t,p) = \\
    M(s+t, p+q)_{w,k;x,j} I(s,q)\alpha(x,j; t,p) \;. 
\end{gather}
The proof is easily completed by the use of the properties of the $I$ operator and then the problem ends up with the decoration transformation
\begin{equation}
\begin{gathered}
       M(s,q)_{w,k; z,m}    M(t,p)_{z,m; x,j} =
       M(s+t,q+p)_{w,k; x,j}  \:,
\end{gathered}
\end{equation}
which is assumed to be true in (\ref{decorationdefinition}). 
 \end{proof}

\subsubsection{ \texorpdfstring{$SU(2)$}{SU(2)} gauge theory  }

In this part, we introduce the integral sum operator $M$ to construct Bailey pairs for the decoration transformation $\eqref{decorationequation}$ which is actually the equality of supersymmetric theories with $SU(2)$ gauge symmetry. The $M$ operator with non-trivial $D$ operator constructed in \cite{Gahramanov:2022jxz} satisfies the star-triangle relation. However, it will be a main part of the construction of Bailey pairs for the decoration transformation and it is
\begin{equation}
\begin{aligned}
M(t,p)_{z,m;x,j}=&
\frac{1}{C(t,p)}\sum_{j=-[ r/2 ]}^{[ r/2 ]} \int _{-\infty}^{\infty} 
\gamma_h(-t+z\pm x,m-p\pm j;\omega_1,\omega_2)      \\
	&\times  \gamma_h(-t-z\pm x,-m-p\pm j;\omega_1,\omega_2) 
	 	\frac{[\mathrm{d}_j {x}]}{2r\sqrt{-\omega_1\omega_2}}\:,
\end{aligned}\label{SU2Moperator}
\end{equation}

where the measure is
\begin{equation}
\begin{aligned}
 \relax [\mathrm{d}_j {x}] = \frac{ \mathrm{d}x}{\gamma_h(\pm 2x, \pm   2j;\omega_1,\omega_2)}\:, \label{measure}
\end{aligned}
\end{equation}

and the spin-independent function is

\begin{equation}
\begin{aligned}
C(t,p)=&\gamma_h(-2t,-2p;\omega_1,\omega_2)\;.
\label{spinindepinoperator}
\end{aligned}
\end{equation}

One can show that the operator (\ref{SU2Moperator}) satisfies the decoration transformation (\ref{operatordecoration}) by the use of integral identity (\ref{decorationequation}) with the following change of variables

\begin{equation}
\begin{aligned}
     {a}_{1,2} & =  -s\pm w, \quad {a}_{3,4}  = -t\pm x\:,
     \\
    {u}_{1,2} & =  -q\pm k, \quad {u}_{3,4}  =  -p\pm m\:.\label{changeofvariables1}
\end{aligned}
\end{equation}

More details for the calculation can be seen in Appendix A in \cite{Gahramanov:2022jxz}.

\subsubsection{ \texorpdfstring{$U(1)$}{U(1)} gauge theory }

We will construct a Bailey pair for the integral identity (\ref{decorationequation2}) where the Seiberg duality has the $U(1)$ gauge symmetry. The $M$ operator is also the same as its corresponding star-triangle relation worked in \cite{Gahramanov:2022jxz} but the $D$ operator is trivial for the decoration transformation 

\begin{equation}
\begin{aligned}
M(t,p)_{z,m;x,j}=&
\frac{1}{C(t,p)}\sum_{j=-[ r/2 ]}^{[ r/2 ]}\int _{-\infty}^{\infty} 
\gamma_h(-t+z+ x,m-p+ j;\omega_1,\omega_2)      \\
	&\times  e^{i\pi \big(\frac{(z+x)(-2t)}{\omega_1 \omega_2r} 
        + \frac{(m+j)(-2p)}{r}\big) } 
 \gamma_h(-t-z- x,-m-p- j;\omega_1,\omega_2) 
	 	\frac{d_j {x}}{2r\sqrt{-\omega_1\omega_2}}\:,
\end{aligned}\label{U1Moperator}
\end{equation}

where the spin-independent function is the same (\ref{spinindepinoperator}).

The following re-parametrization is needed to be able to use the integral identity (\ref{decorationequation2}) 

\begin{equation}
\begin{aligned}
     {a}_{1} & =  -s+ w, \quad {b}_{1}  = -s-w\:,
     \\
          {a}_{2} & =  -t+ x, \quad {b}_{2}  = -t-x\:,
     \\
    {u}_{1} & =  -q+ m, \quad {v}_{1}  =  -q- m
         \\
    {u}_{2} & =  -p+ k, \quad {v}_{2}  =  -p- k \:.\label{changeofvariables2}
\end{aligned}
\end{equation}

After the similar calculations presented in Appendix A in \cite{Gahramanov:2022jxz}, the decoration transformation (\ref{operatordecoration}) will be satisfied by the Bailey pair with $M$ operator (\ref{U1Moperator}).

\subsection{Flipping relation}

We will now be discussing Bailey pairs generated from an initial explicit pair. Noting that $M(t,p)_{z,m;x,j}$ is an integral-sum operator acting on a sequence of functions $f_j(x)$, the relation $\eqref{definition}$ suggests to start with $\alpha(x,j;t,p) = \delta_{jn}\delta(x-u)$ where $n\in \mathbb{Z}$, $u\in \mathbb{C}$ are new parameters.

Then, $\beta(z,m;t,p)$ of the following form
\begin{equation}
 \begin{aligned}
    \beta(z,m; t,p) &= M(t,p)_{z,m;x,j}\delta_{jn}\delta(x-u) \\
    & := M(t,p;z,m; u,n) \;,
\end{aligned}   
\end{equation}
forms a Bailey pair with $\alpha(x,j;t,p)$. From here, we generate new pairs with the Bailey lemma,
\begin{align}
    \alpha(x,k; t+s;p+q) &= I(s,q)\alpha(x,k;t,p)\:, \\
    \beta(x,k; t+s; p+q) &= I(-t,-p)M(s,q)_{x,k;z,m}I(s+t,p+q)\beta(z,m;t,p) \;. 
\end{align}

The relation $\eqref{definition}$ does not give us a particularly interesting result as it yields the star-triangle relation, which we have used to prove the Bailey lemma
\begin{equation}
\begin{aligned}\label{second bailey relation}
    M&(s,q)_{w,k;z,m}I(s+t, p+q)M(t,p;z,m;u,n) \\
    &= I(t,p)M(s+t, p+q ; w,k; u,n)I(s,q)  \;.
\end{aligned}
\end{equation}
An immediate consequence of $\eqref{second bailey relation}$ is the functions $\Tilde{\alpha}(z,m;s,q)$ and $\Tilde{\beta}(w,k;s,q)$ defined by
\begin{align}
\Tilde{\alpha}(z,m;s,q) &= I(s+t, p+q)M(t,p;z,m;u,n)\:, \\
\Tilde{\beta}(w,k;s,q) &=I(t,p)M(s+t, p+q ; w,k; u,n)I(s,q) \;,
\end{align}
form a Bailey pair with respect to parameters $s \in \mathbb{C}$, $q \in \mathbb{Z}$. Applying the lemma once again, we find
\begin{align}
    \Tilde{\alpha}'(z,m;s+c,q+d) =& I(c,d)I(s+t, p+q)M(t,p;z,m;u,n)\:, \\
    \begin{split}
    \Tilde{\beta}'(x,j;s+c,q+d) =& I(-s,-q)M(c,d)_{x,j;w,k}I(s+c,q+d)\\
    & \times I(t,p)M(s+t, p+q ; w,k; u,n)I(s,q)  \;, 
    \end{split}
\end{align}
where $a, c \in \mathbb{C}$ and $b,d \in \mathbb{Z}$ are arbitrary. The relation
\begin{equation}
\Tilde{\beta}'(x,j;s+c,q+d) = M(s+c,q+d)_{x,j;z,m} \Tilde{\alpha}'(z,m;s+c,q+d)   \;,
\end{equation}
yields a non-trivial integral identity
\begin{equation}
\begin{aligned}
&M(c,d)_{x,j;w,k}I(s+c,q+d)I(t,p)M(s+t, p+q ; w,k; u,n) \\
&= I(-s,-q)I(s,q) M(s+c,q+d)_{x,j;z,m} I(c,d)I(s+t, p+q)M(t,p;z,m;u,n) \;,
\end{aligned}
\end{equation}

but one sees the reduction 
\begin{equation}
\begin{aligned}
\label{operatorflipping}
M(c,d)_{x,j;w,k}M(s+t, p+q ; w,k; u,n) = M(s+c,q+d)_{x,j;z,m} M(t,p;z,m;u,n) \;,
\end{aligned}
\end{equation}
where the equality will be named "flipping relation" in our discussion.

\subsubsection{ \texorpdfstring{$SU(2)$}{SU(2)} gauge theory}

The integral identity (\ref{flippingequation1}) is already obtained double integral method \cite{Catak:2021coz} by applying the integral identity (\ref{decorationequation}) so it is expected to have the same integral-sum operator $M$. However, we need another operator due to the new definition of $\beta(z,m;t,p)$ after modifying  $\alpha(x,j;t,p) = \delta_{jn}\delta(x-u)$. Then the function $M(t,p;z,m;u,n)$ takes the following form

\begin{equation}
\begin{aligned}
M(t,p;z,m;u,n)=&
\frac{\Delta_n^u}{C(t,p)} \frac{1}{2r\sqrt{-\omega_1\omega_2}}
\gamma_h(-t\pm z\pm u,-p\pm m\pm n;\omega_1,\omega_2)  \:,
\end{aligned}\label{SU2M}
\end{equation}

where $C(t,p)$ is (\ref{spinindepinoperator}) and the measure (\ref{measure}) becomes 

\begin{equation}
\begin{aligned}
 \relax \Delta_n^u = \frac{ 1}{\gamma_h(\pm 2u, \pm   2n;\omega_1,\omega_2)}\:.
\end{aligned}
\end{equation}

During calculations, one can use the given change of variables

\begin{equation}
\begin{aligned}
     a_{1,2}&=-c\pm x, \quad u_{1,2}=-d\pm j\:,
     \\
          a_{3,4}&=-t-s\pm u, \quad u_{3,4}=-p-q \pm n\:,
     \\
    \Tilde{a}_{1,2}&=-c-s\pm x, \quad \Tilde{u}_{1,2}=-d-q\pm j
         \\
    \Tilde{a}_{3,4}&=-t\pm u, \quad \Tilde{u}_{3,4}=-p \pm n \:.\label{changeofvariables3}
\end{aligned}
\end{equation}

The integral sum operator $M$ (\ref{SU2Moperator}) and the operator (\ref{SU2M}) will satisfy the operator form of the flipping relation (\ref{operatorflipping}) with the integral identity (\ref{flippingequation1}) under the change of variables (\ref{changeofvariables3}).

\subsubsection{ \texorpdfstring{$U(1)$}{U(1)} gauge theory}

By the definition of the flipping relation (\ref{operatorflipping}), we introduce the following function as discussed also in the previous part

\begin{equation}
\begin{aligned}
M(t,p;z,m;u,n)=&
\frac{1}{C(t,p)}\frac{1}{r\sqrt{-\omega_1\omega_2}}
\gamma_h(-t+z+ u,m-p+ n;\omega_1,\omega_2)      \\
	&\times  e^{i\pi \big(\frac{(z+u)(-2t)}{\omega_1 \omega_2r} 
        + \frac{(m+n)(-2p)}{r}\big) } 
 \gamma_h(-t-z- u,-m-p- n;\omega_1,\omega_2) 
	 	\:,
\end{aligned}\label{U1M}
\end{equation}

where $C(t,p)$ is (\ref{spinindepinoperator}).

The following change of variables will be used to express the operators (\ref{U1Moperator}) and (\ref{U1M}) for Bailey pairs of (\ref{flippingequation2}) satisfy the flipping relation (\ref{operatorflipping})

\begin{equation}
\begin{aligned}
     {a}_{1} & =  -c+ w, \quad {b}_{1}  = -c-w\:, \quad
          {a}_{2}  =  -s-t+ x, \quad {b}_{2}  = -s-t-x\:,
     \\
    {u}_{1} & =  -d+ m, \quad {v}_{1}  =  -d- m\:, \quad
    {u}_{2}  =  -p-q+ k, \quad {v}_{2}  =  -p-q- k
    \\
         \Tilde{a}_{1} & =  -c-s+ x, \quad \Tilde{b}_{1}  = -c-s- x\:, \quad
          \Tilde{a}_{2} =  -t+ u, \quad \Tilde{b}_{2}  = -t-u\:,
     \\
    \Tilde{u}_{1} & =  -d-q+ j, \quad \Tilde{v}_{1}  =  -d-q- j\:, \quad
    \Tilde{u}_{2}  =  -p+ n, \quad \Tilde{v}_{2}  =  -p- n
    \:,\label{changeofvariables4}
\end{aligned}
\end{equation}

where parameters with a tilde are defined in (\ref{tilde2}).

\section{The relations for IRF-type models}

We will derive the decoration transformation and the flipping relation for IRF-type lattice spin models \cite{Date:1987wy, date:1987-2, BAXTER1986321}. Derivations will be based on the integrable edge interacting lattice spin models obtained by gauge/YBE correspondence. Therefore, we will use relations for Ising-tye models to describe IRF-type models. Similar constructions are studied to obtain solutions to IRF-type YBE by the use of the star-triangle relation and the star-star relation in which they are integrability conditions for nearest neighbor edge-interacting spin models. 

The Boltzmann weight of an interaction round a face of four spins will be factorized into four nearest-neighbor edge interactions by attaching a central spin with which all four spins interact

\begin{align}
R\left(\begin{array}{cc}
\sigma_{4} & \sigma_{3} \\
\sigma_{1} & \sigma_{2}
\end{array}\right)=\sum_{m_{i}} \int d x_{i} W\left(\sigma_{1}, \sigma_{i}\right) W\left(\sigma_{i}, \sigma_{2}\right) W\left(\sigma_{3}, \sigma_{i}\right) W\left(\sigma_{i}, \sigma_{4}\right) \:.
\label{IRFboltzmann}
\end{align}

The star-triangle relation and the decoration transformation for the same square lattice spin model are the tools to derive the decoration transformation and the flipping relation for IRF-type lattice spin models.  At first, we have a finite lattice Fig.\ref{squarefinite}. Then we apply the star-triangle relation and the decoration transformation to obtain figurative equalities.

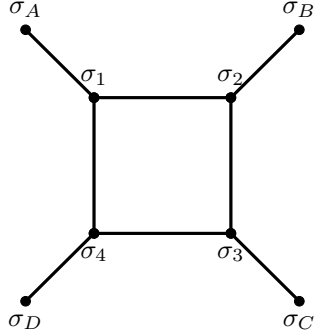
\begin{figure}[tbh]
\centering
\begin{tikzpicture}[scale=0.9]

\begin{scope}[xshift=-200pt]


\draw[-,very thick] (1,1)--(2,2);
\draw[-,very thick] (1,-1)--(2,-2);
\draw[-,very thick] (-1,-1)--(-2,-2);
\draw[-,very thick] (-1,1)--(-2,2);
\draw[-,very thick] (-1,-1)--(1,-1);
\draw[-,very thick] (-1,1)--(1,1);

\draw[-,very thick] (-1,1)--(-1,-1);
\draw[-,very thick] (1,1)--(1,-1);

\filldraw[fill=black,draw=black] (-2,-2) circle (2.0pt)
node[below=1pt]{\small $\sigma_D$};
\filldraw[fill=black,draw=black] (2,-2) circle (2.0pt)
node[below=1pt]{\small $\sigma_C$};
\filldraw[fill=black,draw=black] (2,2) circle (2.0pt)
node[above=1pt]{\small $\sigma_B$};
\filldraw[fill=black,draw=black] (-2,2) circle (2.0pt)
node[above=1pt]{\small $\sigma_A$};

\filldraw[fill=black,draw=black] (-1,-1) circle (2.0pt)
node[below=1pt]{\small $\sigma_4$};
\filldraw[fill=black,draw=black] (1,-1) circle (2.0pt)
node[below=1pt]{\small $\sigma_3$};
\filldraw[fill=black,draw=black] (1,1) circle (2.0pt)
node[above=1pt]{\small $\sigma_2$};
\filldraw[fill=black,draw=black] (-1,1) circle (2.0pt)
node[above=1pt]{\small $\sigma_1$};

\filldraw[fill=black,draw=black] (-1,-1) circle (2.0pt)
node[below=1pt]{\small };
\filldraw[fill=black,draw=black] (1,-1) circle (2.0pt)
node[below=1pt]{\small };
\filldraw[fill=black,draw=black] (1,1) circle (2.0pt)
node[above=1pt]{\small };
\filldraw[fill=black,draw=black] (-1,1) circle (2.0pt)
node[above=1pt]{\small };

\end{scope}

\end{tikzpicture}
\caption{Lattice for the derivation of the decoration transformation and the flipping relation for IRF-type models.}
\label{squarefinite}
\end{figure}

\subsection{The flipping relation}

The derivation of the flipping relation for IRF-type models is studied by the use of the star-triangle relation and the decoration relation of Ising-type models in a different order or for different sites. 

Applying the steps on Fig.\ref{squarefinite} to obtain the LHS  of Fig.\ref{IRFflippingfigure} are

\begin{itemize}
    \item the star-triangle relation on a star with central spin $\sigma_1$,
    \item the star-triangle relation for the triangle with  nodes $\sigma_2, \sigma_3, \sigma_4$,
    \item the decoration transformation on broken line with central spin $\sigma_3$,
    \item the star-triangle relation for the star with the external nodes $\sigma_2, \sigma_4, \sigma_C$,
    \item the decoration transformation for the nodes $\sigma_2, \sigma_4$.
\end{itemize}

The steps for the RHS of the flipping relation depicted in Fig.\ref{IRFflippingfigure} are

\begin{itemize}
    \item the star-triangle relation on a star with central spin $\sigma_2$,
    \item the star-triangle relation for the triangle with the nodes $\sigma_1, \sigma_3, \sigma_4$,
    \item the decoration transformation on a broken line with a centered node $\sigma_4$,
    \item the star-triangle relation for the star with surrounding nodes $\sigma_1, \sigma_3, \sigma_D$,
    \item the decoration transformation for the interaction between $\sigma_1, \sigma_3$.
\end{itemize}

After following the steps we reached pictorial equality of the flipping relation in Fig.\ref{IRFflippingfigure} which is the trivial version of the IRF-type YBE \cite{BAXTER1986321, date:1987-2}. We will see that it is actually the star-star relation. 

\begin{figure}[tbh]
\centering
\begin{tikzpicture}[scale=0.9]

\begin{scope}[xshift=-340pt]


\draw[-,very thick] (1,1)--(2,2);
\draw[-,very thick] (-1,-1)--(-2,-2);
\draw[-,very thick] (-1,-1)--(2,-2);
\draw[-,very thick] (-2,2)--(1,1);
\draw[-,very thick] (-2,2)--(-1,-1);
\draw[-,very thick] (1,1)--(2,-2);

\draw[-,very thick] (1,1)--(-1,-1);

\filldraw[fill=black,draw=black] (-2,-2) circle (2.0pt)
node[below=1pt]{\small $\sigma_D$};
\filldraw[fill=black,draw=black] (2,-2) circle (2.0pt)
node[below=1pt]{\small $\sigma_C$};
\filldraw[fill=black,draw=black] (2,2) circle (2.0pt)
node[above=1pt]{\small $\sigma_B$};
\filldraw[fill=black,draw=black] (-2,2) circle (2.0pt)
node[above=1pt]{\small $\sigma_A$};

\filldraw[fill=black,draw=black] (-1,-1) circle (2.0pt)
node[below=1pt]{\small $\sigma_4$};
\filldraw[fill=black,draw=black] (0,0) circle (2.0pt)
node[below=1pt]{\small $\sigma_0$};
\filldraw[fill=black,draw=black] (1,1) circle (2.0pt)
node[above=1pt]{\small $\sigma_2$};

\filldraw[fill=black,draw=black] (-1,-1) circle (2.0pt)
node[below=1pt]{\small };
\filldraw[fill=black,draw=black] (0,0) circle (2.0pt)
node[below=1pt]{\small };
\filldraw[fill=black,draw=black] (1,1) circle (2.0pt)
node[above=1pt]{\small };

\end{scope}

\fill[white!] (-8,0) circle (0.01pt)
node[left=0.05pt] {\color{black}$=$};

\begin{scope}[xshift=-120pt]


\draw[-,very thick] (-1,1)--(1,-1);
\draw[-,very thick] (1,-1)--(2,-2);
\draw[-,very thick] (-1,1)--(-2,2);
\draw[-,very thick] (-2,-2)--(1,-1);
\draw[-,very thick] (-1,1)--(2,2);

\draw[-,very thick] (-1,1)--(-2,-2);
\draw[-,very thick] (2,2)--(1,-1);

\filldraw[fill=black,draw=black] (0,0) circle (2.0pt)
node[below=1pt]{\small $\sigma_0$};
\filldraw[fill=black,draw=black] (1,-1) circle (2.0pt)
node[below=1pt]{\small $\sigma_3$};
\filldraw[fill=black,draw=black] (-1,1) circle (2.0pt)
node[above=1pt]{\small $\sigma_1$};

\filldraw[fill=black,draw=black] (-2,-2) circle (2.0pt)
node[below=1pt]{\small $\sigma_D$};
\filldraw[fill=black,draw=black] (2,-2) circle (2.0pt)
node[below=1pt]{\small $\sigma_C$};
\filldraw[fill=black,draw=black] (2,2) circle (2.0pt)
node[above=1pt]{\small $\sigma_B$};
\filldraw[fill=black,draw=black] (-2,2) circle (2.0pt)
node[above=1pt]{\small $\sigma_A$};

\filldraw[fill=black,draw=black] (1,-1) circle (2.0pt)
node[below=1pt]{\small };
\filldraw[fill=black,draw=black] (0,0) circle (2.0pt)
node[above=1pt]{\small };
\filldraw[fill=black,draw=black] (-1,1) circle (2.0pt)
node[above=1pt]{\small };

\end{scope}
\end{tikzpicture}
\caption{The flipping relation of IRF-type models.}
\label{IRFflippingfigure}
\end{figure}
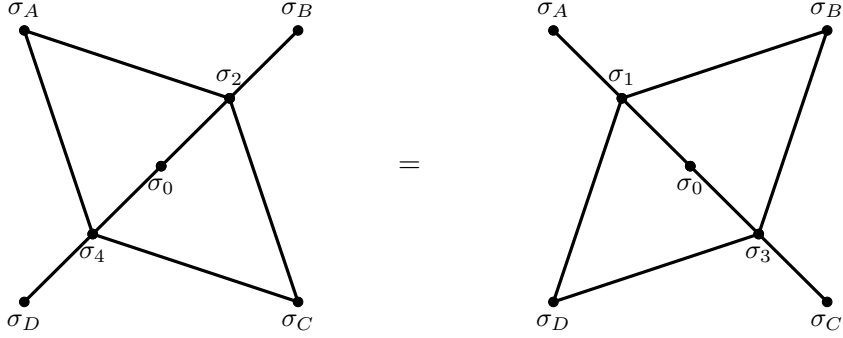

The flipping relation Fig.\ref{IRFflippingfigure} can be written in mathematical form

   \begin{align}
    \begin{aligned}
\sum_{m_0 } \int dx_0  \; R\left( \begin{array}{cc}
  {\sigma_A} & {\sigma_0}\\ {\sigma_B} & 
  {\sigma_C}\end{array}\right) \; R\left(\begin{array}{cc}
    {\sigma_0} & {\sigma_C}\\ {\sigma_A} &
    {\sigma_D}\end{array}\right)
 \makebox[8em]{}
      \\
\makebox[8em]{} 
= \sum_{m_0 } \int dx_0 \; R\left(\begin{array}{cc}
        {\sigma_D} & {\sigma_0}\\ {\sigma_B} &
        {\sigma_A}\end{array}\right)
 R\left(\begin{array}{cc}
          {\sigma_D} & {\sigma_B}\\ {\sigma_C} &
          {\sigma_0}\end{array}\right) \:.\label{IRFflippingequation}
\end{aligned}
\end{align}

where the definition of Boltzmann weight of IRF-type models (\ref{IRFboltzmann}) are used.

\subsection{The decoration transformation}

The idea is similar, we apply the star-triangle relation and the decoration transformation of the edge-interacting lattice spin model. To do so, we start with one side of the flipping relation which we will take RHS of Fig.\ref{flippingfigure}. The steps are the following

\begin{itemize}
    \item the decoration transformation for the interactions of $\sigma_0$,
    \item the star-triangle relation for the triangle with the nodes $\sigma_1, \sigma_3, \sigma_D$,
    \item the star-triangle relation on the star with central spin $\sigma_1$,
    \item the star-triangle relation for the triangle with nodes $\sigma_B, \sigma_3, \sigma_0$ where $\sigma_0$ appeared after second step,
    \item the decoration transformation on the broken line centered with the node $\sigma_3$,
    \item the star-triangle relation for the star with surrounding nodes $\sigma_B, \sigma_C, \sigma_0$.
\end{itemize}

Here we note that the procedure has a crucial point at the third step. If one applies the star-triangle relation for the star with the central node $\sigma_3$, the figurative result is the same but the external Boltzmann weights and two of the centered Boltzmann weights interchange. That is, the Bolztmann weights $W(\sigma_A,\sigma_0)$ and $W(\sigma_C,\sigma_0)$ take place at $W(\sigma_A,\sigma_B)$ and $W(\sigma_C,\sigma_B)$ respectively, and vice versa. 

\begin{figure}[tbh]
\centering
\begin{tikzpicture}[scale=0.9]

\begin{scope}[xshift=-320pt]


\draw[-,very thick] (-1,1)--(1,-1);
\draw[-,very thick] (1,-1)--(2,-2);
\draw[-,very thick] (-1,1)--(-2,2);
\draw[-,very thick] (-2,-2)--(1,-1);
\draw[-,very thick] (-1,1)--(2,2);

\draw[-,very thick] (-1,1)--(-2,-2);
\draw[-,very thick] (2,2)--(1,-1);

\filldraw[fill=black,draw=black] (0,0) circle (2.0pt)
node[below=1pt]{\small $\sigma_0$};
\filldraw[fill=black,draw=black] (1,-1) circle (2.0pt)
node[below=1pt]{\small $\sigma_3$};
\filldraw[fill=black,draw=black] (-1,1) circle (2.0pt)
node[above=1pt]{\small $\sigma_1$};

\filldraw[fill=black,draw=black] (-2,-2) circle (2.0pt)
node[below=1pt]{\small $\sigma_D$};
\filldraw[fill=black,draw=black] (2,-2) circle (2.0pt)
node[below=1pt]{\small $\sigma_C$};
\filldraw[fill=black,draw=black] (2,2) circle (2.0pt)
node[above=1pt]{\small $\sigma_B$};
\filldraw[fill=black,draw=black] (-2,2) circle (2.0pt)
node[above=1pt]{\small $\sigma_A$};

\filldraw[fill=black,draw=black] (1,-1) circle (2.0pt)
node[below=1pt]{\small };
\filldraw[fill=black,draw=black] (0,0) circle (2.0pt)
node[above=1pt]{\small };
\filldraw[fill=black,draw=black] (-1,1) circle (2.0pt)
node[above=1pt]{\small };

\end{scope}

\fill[white!] (-7,0) circle (0.01pt)
node[left=0.05pt] {\color{black}$=$};

\begin{scope}[xshift=-110pt]


\draw[-,very thick] (-2,2)--(2,2);
\draw[-,very thick] (2,-2)--(2,2);
\draw[-,very thick] (0,0)--(-2,2);
\draw[-,very thick] (2,-2)--(0,0);
\draw[-,very thick] (-2,-2)--(0,0);
\draw[-,very thick] (0,0)--(2,2);

\filldraw[fill=black,draw=black] (0,0) circle (2.0pt)
node[below=1pt]{\small $\sigma_0$};

\filldraw[fill=black,draw=black] (-2,-2) circle (2.0pt)
node[below=1pt]{\small $\sigma_D$};
\filldraw[fill=black,draw=black] (2,-2) circle (2.0pt)
node[below=1pt]{\small $\sigma_C$};
\filldraw[fill=black,draw=black] (2,2) circle (2.0pt)
node[above=1pt]{\small $\sigma_B$};
\filldraw[fill=black,draw=black] (-2,2) circle (2.0pt)
node[above=1pt]{\small $\sigma_A$};

\filldraw[fill=black,draw=black] (0,0) circle (2.0pt)
node[above=1pt]{\small };

\end{scope}
\label{IRFdecorationfigure}
\end{tikzpicture}
\caption{The decoration transformation for IRF-type models.}
\end{figure}
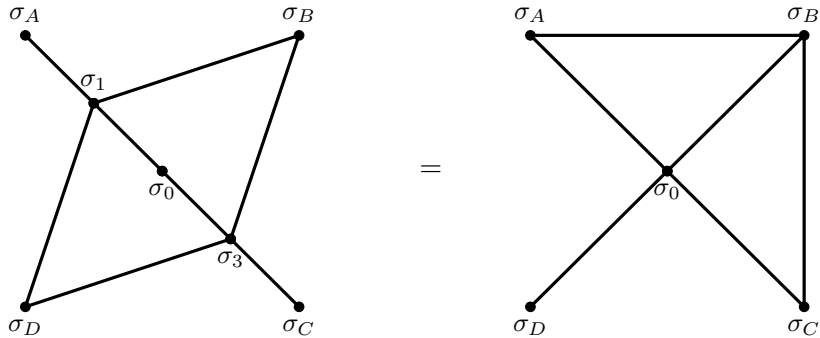

The mathematical expression of the Fig.\ref{IRFdecorationfigure} is written below and it is also been studied as the fusion procedure \cite{Date:1987wy, date:1987-2}

   \begin{align}
    \begin{aligned}
\sum_{m_0 } \int dx_0  \; R\left( \begin{array}{cc}
  {\sigma_A} & {\sigma_0}\\ {\sigma_B} & 
  {\sigma_D}\end{array}\right) \; R\left(\begin{array}{cc}
    {\sigma_0} & {\sigma_B}\\ {\sigma_D} &
    {\sigma_C}\end{array}\right)
=W(\sigma_A,\sigma_B)W(\sigma_C,\sigma_B)
R\left(\begin{array}{cc}
    \sigma_A    & \sigma_C \\
\sigma_B         & \sigma_D
    \end{array}\right)\:.\label{IRFdecorationequation}
\end{aligned}
\end{align}
where the remaining two edge interactions can be eliminated \cite{date:1987-2, Baxter:1997ssr}.

If one also applies decoration transformation for IRF-type models to LHS of Fig.\ref{IRFflippingfigure}, then expect to obtain the star-star relation \cite{Baxter:1997ssr}. However, the derivation of the star-star relation needs more attention and the discussion is left for further studies.

\section{Conclusions}\label{conclusion}

In this work, we constructed hyperbolic hypergeometric solutions to the decoration transformation and the flipping relation via the integral identities obtained by equality of partition functions of dual $\mathcal{N}=2$ supersymmetric gauge theories on three-dimensional squashed lens space $S^3_b/\mathbb{Z}_r$. We studied two solutions for both transformations and we obtained the latter solutions by gauge symmetry breaking method. 

The Bailey pairs for the decoration transformation and the flipping relation are constructed by the use of the $M$ operator which is used for the construction of the star-triangle relation and the star-star relation \cite{Gahramanov:2022jxz}. The decoration transformation and the flipping relation for IRF-type models \cite{BAXTER1986321, Date:1987wy, date:1987-2} are also derived.

There are various directions and further studies to be completed. In this study, the integral identities (\ref{decorationequation}) and (\ref{decorationequation2}) are studied as the decoration transformation but its $r=1$ case and its Euler gamma limit version are studied as the star-triangle relation in \cite{Kels:2018xge}. So it could be interesting to obtain a generalized version of Barnes’s first lemma \cite{Barnes1908} by the use of Euler gamma limit ($r \to \infty$) \cite{Eren:2019ibl} for (\ref{decorationequation2}). 
Also, basic hypergeometric beta integrals \cite{vandebult2013basic, Gahramanov:2016wxi} can be discussed as the solutions of the decoration transformation as well.
From the lattice spin models point of view, it is not clear to us how to study the transformation formula in Theorem 5.6.17 \cite{BultThesis} and the equation (7.26) in \cite{Spiridonov:2011hf}. Hence, the flipping relation may need more attention.
The intersections between the flipping relation and the commutativity of Baxter's Q-operators for Ruijsenaars-Sutherland hyperbolic systems \cite{Belousov:2023sat} might be interesting. 




\section*{Acknowledgements}

We are grateful to Ilmar Gahramanov for his valuable discussions and further comments to improve the work. We are also thankful to Osman Ergeç. Mustafa Mullahasanoglu also would like to thank Amihay Hanany and A. P. Isaev for their enlightening discussions. We are also supported by Istanbul Integrability and Stringy Topics Initiative (\href{https://istringy.org/}{istringy.org}) Erdal Catak and Mustafa Mullahasanoglu are supported by the 3501-TUBITAK Career Development Program under grant number 122F451.

Mustafa Mullahasanoglu would like to thank the organizers of the following events during which the work was carried out: the "Young Researchers Integrability School and Workshop", Durham, 17–21 July 2023; the “XII. International Symposium on Quantum Theory and Symmetries", Prague, 24–28 July 2023; the "RDP School and Workshop on Mathematical Physics", Yerevan, 19-24 August 2023.

\appendix

\section{Notations and hyperbolic hypergeometric gamma function}\label{appendix1}

We introduce the definitions and notations of the hyperbolic hypergeometric gamma function \cite{van2007hyperbolic, Andersen:2014aoa} that we mainly use and benefit from in this work
\begin{align}
	\gamma^{(2)}(z;\omega_{1},\omega_{2})=e^{\frac{\pi i}{2}B_{2,2}(z;\omega_{1},\omega_{2})}\frac{(e^{2\pi i\frac{z}{\omega_{2}}}\tilde{q};\tilde{q})_\infty}{(e^{2\pi i\frac{z}{\omega_{1}}};q)_\infty} \; ,
\end{align}
where $\tilde{q}=e^{2\pi i \omega_{1}/\omega_{2}}$, $q=e^{-2\pi i \omega_{2}/\omega_{1}}$ , the $q$-Pochhammer symbol is 
\begin{align}
    (z;q)_{\infty}=\prod_{i=0}^{\infty}(1-zq^i)\:,
\end{align}
and the second Bernoulli polynomial is 
\begin{equation}
 B_{2,2}(z;\omega_1,\omega_2)=\frac{z^2}{\omega_1\omega_2}-\frac{z}{\omega_1}-\frac{z}{\omega_2}+\frac{\omega_1}{6\omega_2}+\frac{\omega_2}{6\omega_1}+\frac{1}{2}\:,
\end{equation}
 with the complex variables $\omega_{1}$, $\omega_{2}$.

There are several integral representations for the hyperbolic hypergeometric gamma function, one of them is the following
\begin{align}
	\gamma^{(2)}(z;\omega_{1},\omega_{2})=\exp{\left(-\int_{0}^{\infty}\frac{dx}{x}\left[\frac{\sinh{x(2z-\omega_{1}-\omega_{2})}}{2\sinh{(x\omega_{1})}\sinh{(x\omega_{2})}}-\frac{2z-\omega_{1}-\omega_{2}}{2x\omega_{1}\omega_{2}}\right]\right)} \; ,
\end{align}
where $Re(\omega_{1}),Re(\omega_{2})>0$ and $Re(\omega_{1}+\omega_{2})>Re(z)>0$. 

Some characteristic  properties of the function are
	\begin{align}
	\text{Symmetry:} & ~~~ & \gamma^{(2)}(z;\omega_{1},\omega_{2})=\gamma^{(2)}(z;\omega_{2},\omega_{1})  \\
	\text{Reflection:} & ~~~ & \gamma^{(2)}(z;\omega_{1},\omega_{2})\gamma^{(2)}(\omega_{1}+\omega_{2}-z;\omega_{1},\omega_{2})=1 \\
	\text{Scaling:} & ~~~ & \gamma^{(2)}(z;\omega_{1},\omega_{2})=\gamma^{(2)}(u z;u\omega_{1},u\omega_{2})\\
	\text{Conjugation:} & ~~~ & \gamma^{(2)}(z;\omega_{1},\omega_{2})^{*}=\gamma^{(2)}(z^{*};\omega_{2}^{*},\omega_{1}^{*})
 \\
	\text{Difference equation:} & ~~~ & \frac{\gamma^{(2)}(z+\omega_1;\omega_1,\omega_2)}{\gamma^{(2)}(z;\omega_1,\omega_2)}=2\sin\left(\frac{\pi z}{\omega_2}\right),~(\omega_1\leftrightarrow\omega_2)
	\end{align}
The asymptotic behavior of the hyperbolic hypergeometric gamma function is used while calculating flavor reduction and the gauge symmetry breaking 
	\begin{align}
	\lim_{z\to\infty}e^{\frac{\pi i}{2}B_{2,2}(z;\omega_{1},\omega_{2})}\gamma^{(2)}(z;\omega_{1},\omega_{2})=1\: \: \text{for} \:  \: \arg{\omega_{2}+\pi}>\arg{z}>\arg{\omega_{1}}   \\
	\lim_{z\to\infty}e^{-\frac{\pi i}{2}B_{2,2}(z;\omega_{1},\omega_{2})}\gamma^{(2)}(z;\omega_{1},\omega_{2})=1 \: \: \text{for} \: \: \arg{\omega_{2}}>\arg{z}>\arg{\omega_{1}-\pi} \; ,\label{asymptotics}
	\end{align}
	where $\text{Im}(\frac{\omega_{1}}{\omega_{2}})>0$.
We will not utilize but there is also another asymptotic property of the hyperbolic gamma function which reduces to the Euler gamma function
\begin{equation}
	\lim_{\omega_2\to\infty} \Big(\frac{\omega_2}{2\pi\omega_1}\Big)^{\frac{z}{\omega_2}-\frac{1}{2}}\gamma^{(2)}(z;\omega_1,\omega_2)=\frac{\Gamma(z/\omega_1)}{\sqrt{2\pi}} .
	\label{gamma_limit}
	\end{equation}
 
However, we will introduce a lens version of the hyperbolic hypergeometric gamma function \cite{Gahramanov:2016ilb} which is related to the improved double sine function \cite{Nieri:2015yia} 
\begin{align}
\gamma_h(z,y;\omega_1,\omega_2) 
=    \gamma^{(2)}(-iz-i\omega_1y;-i\omega_1r,-i\omega) \: \gamma^{(2)}(-iz-i\omega_2(r-y);-i\omega_2r,-i\omega) \:,
\end{align}
where $\omega:=\omega_1+\omega_2$ and this shorthand notation is used in the rest of the paper.

The reflection property of the lens hyperbolic hypergeometric gamma function is
\begin{equation}
    \begin{aligned}
    \gamma_h( z, y;\omega_1,\omega_2)\gamma_h( \omega-z,r- y;\omega_1,\omega_2) =1
\:,
\end{aligned}
\end{equation}
or  
\begin{equation}
    \begin{aligned}
    \gamma_h( z, y;\omega_1,\omega_2)\gamma_h( \omega-z, y;\omega_2,\omega_1) =1
\:,
\end{aligned}
\end{equation}
where the symmetry and reflection properties of hyperbolic hypergeometric gamma function are mixed.

We also carry out the following shorthand notation
\begin{equation}
    \begin{aligned}
    \gamma_h(\pm z,\pm y;\omega_1,\omega_2) =\gamma_h( z, y;\omega_1,\omega_2)\gamma_h(- z,- y;\omega_1,\omega_2)\:.
\end{aligned}
\end{equation}

\bibliographystyle{utphys}
\bibliography{refYBE}

\end{document}